\definecolor{Pastel}{RGB}{245,245,250}
\definecolor{Pastel2}{gray}{0.99}
\newcommand*\bigcdot{\mathpalette\bigcdot@{.5}}
\newcommand*\bigcdot@[2]{\mathbin{\vcenter{\hbox{\scalebox{#2}{$\m@th#1\bullet$}}}}}
\newcommand{\bt}[1]{\mbox{$\bf #1$}}
\def\l{\left(}
\def\r{\right)}
\def\thm@space@setup{\thm@preskip=6pt
\thm@postskip=6pt}
\newtheorem{theorem}{Theorem}[section] 
\newtheorem{lemma}[theorem]{Lemma}     
\newtheorem{corollary}[theorem]{Corollary} 
\newtheorem{proposition}[theorem]{Proposition} 
\newtheorem{definition}[theorem]{Definition} 
\newcounter{remark}
\newenvironment{remark}[1][]{\refstepcounter{remark}\medskip
   \noindent\textbf{Remark~\theremark.#1} \rmfamily\ignorespaces}{\medskip}
\newcommand{\baseL}{\tilde{\bt L}}
\newcommand{\bfm}{\pmb{\mu}}
\newcommand{\bfl}{\pmb{\lambda}}
\newcommand{\tbfm}{\tilde{\bfm}}
\newcommand{\tbfl}{\tilde{\bfl}}
\newcommand{\Cml}{\bt C(\bfm, \bfl)}
\newcommand{\Clm}{\bt C(\bfl, \bfm)}
\newcommand{\diag}[1]{\mathrm{diag}\l \bt #1\r}
\newcommand{\Lv}{\bt L(\bt v, \rho)}
\newcommand{\defeq}{\doteq}
\title{Fast DCT+: A Family of Fast Transforms Based on \\ Rank-One Updates of the Path Graph}
\name{Samuel Fern\'andez-Mendui\~na, Eduardo Pavez, Antonio Ortega}
\address{University of Southern California, Los Angeles, CA 90089, USA}
\begin{document}
\ninept
\maketitle
\begin{abstract}
This paper develops fast graph Fourier transform (GFT) algorithms with $O(n\log n)$ runtime complexity for rank-one updates of the path graph. We first show that several commonly-used audio and video coding transforms belong to this class of GFTs, which we denote by DCT+. Next, starting from an arbitrary generalized graph Laplacian and using rank-one perturbation theory, we provide a factorization for the GFT after perturbation. This factorization is our central result and reveals a progressive structure: we first apply the unperturbed Laplacian's GFT and then multiply the result by a Cauchy matrix. By specializing this decomposition to path graphs and exploiting the properties of Cauchy matrices, we show that Fast DCT+ algorithms exist. We also demonstrate that progressivity can speed up computations in applications involving multiple  transforms related by rank-one perturbations (e.g., video coding) when combined with pruning strategies. Our results can be extended to other graphs and rank-$k$ perturbations. Runtime analyses show that Fast DCT+ provides computational gains over the naive method for graph sizes larger than $64$, with runtime approximately equal to that of $8$ DCTs.
\end{abstract}
\begin{keywords} Graph Fourier transform, fast algorithms, Cauchy matrices, rank-one, graph Laplacian, path graph.
\end{keywords}%
\section{Introduction}
\label{sec:intro}
The graph Fourier transform (GFT) \cite{ortega2018graph} allows to manipulate graph signals, much like the discrete Fourier transform (DFT) in digital signal processing (DSP). 
However, the GFT often lacks fast implementations. In DSP, the fast Fourier transform (FFT) \cite{brigham1988fast} allows computing the DFT with  $O(n\log n)$ runtime complexity, improving upon the $O(n^2)$ naive matrix-vector product (NMVP) algorithm. The FFT exploits symmetries in regular domains, but these symmetries are typically absent in graphs: exact $O(n\log n)$ algorithms for GFTs are known only for specific graph families \cite{strang1999discrete} related to either the discrete trigonometric transforms (DTTs) \cite{puschel_algebraic_2003} or circulant structures \cite{ekambaram2013circulant} for which the GFT is the DFT. For other graphs, the GFT requires NMVP with $O(n^2)$ complexity. 

Existing approaches to improve efficiency of GFT algorithms seek structured factorizations of the transform basis (e.g., via Givens rotations). Early efforts approximated the graph Laplacian \cite{le2017approximate} or its eigenvector basis \cite{frerix2019approximating} by a sequence of structured matrices. However, since spectral mismatch errors are hard to control \emph{a priori}, balancing complexity and approximation accuracy becomes challenging. Moreover, finding optimal factorizations is NP-hard \cite{frerix2019approximating} and relaxation approaches still involve significant setup costs. A partial solution \cite{lu2019fast} was proposed  for graphs with symmetries by exploiting exact butterfly-based \cite{han2013butterfly} decompositions of the transform basis. This approach can halve the number of multiplications, but the complexity remains $O(n^2)$, limiting scalability as the graph sizes grow.
\begin{figure}
    \centering
    \includegraphics[width=\linewidth]{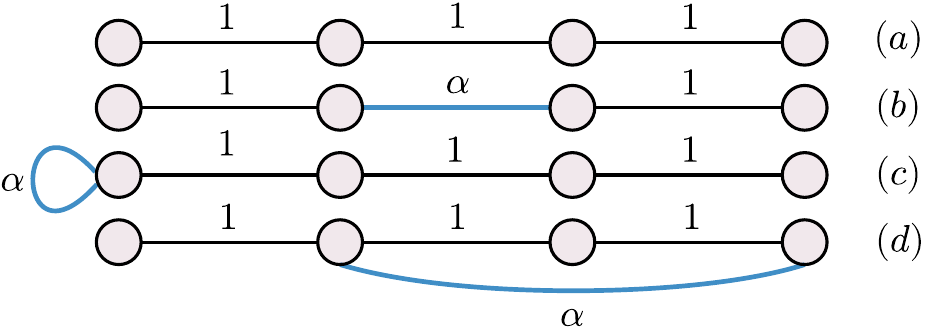}
    \caption{Path graph (a) and examples of DCT+ graphs, related to the path graph by a rank-one update of the Laplacian (b-d).}
    \label{fig:path_graph}
    \vspace{-1.5em}
\end{figure}

Hence, with existing methods one can have 1) $O(n\log n)$ algorithms that approximate the actual GFT for graphs with arbitrary topology or 2) exploit specific properties of the graph (e.g., symmetries) to have exact GFT factorizations,  which reduces computations but remains in $O(n^2)$. In this paper, we achieve both $O(n\log n)$ algorithms and exact factorizations for graphs derived from path graph  modifications \cite{strang1999discrete} (cf.~Fig.~\ref{fig:path_graph}). These graphs are often used to design transforms for audio/video coding \cite{malvar1990lapped, bross2021overview}: transforms derived from path graphs with self-loops of different strengths \cite{pavez2017learning} are implemented in the AV2 codec of the Alliance for Open Media \cite{sole_segui_communication, han2021technical}, while graphs resulting from adding, removing, or updating an edge to the path graph appear associated with Lapped transforms \cite{lu2019lapped} and graph learning \cite{lu2024adaptive, egilmez2020graph}. We unify these cases by showing that the Laplacian of all such graphs is related to the Laplacian of the path graph by a rank-one update. Since the transform for the path graph is the discrete cosine transform (DCT), we denote by DCT+ the family of transforms related to the DCT via a rank-one update of their corresponding graph Laplacian.

Our DCT+ computation approach is based on a general result: starting from the generalized Laplacian of an arbitrary graph and using rank-one perturbation theory \cite{bunch_rank-one_1978}, we derive an exact structured factorization for the eigenvector basis of the updated graph Laplacian (Theorem~\ref{thm:main_eq}). This decomposition reveals a \emph{progressive structure}: the transform for any rank-one update first applies the original unperturbed Laplacian's transform and then multiplies the result by a Cauchy matrix. Since matrix-vector products with Cauchy matrices have $O(n\log^2n)$ algorithms \cite{golub1985trummer,gerasoulis1988fast},  we show that once the signal coefficients in the unperturbed transform domain are available,   the transform of any updated Laplacian can be implemented with an additional $O(n\log^2 n)$ overhead (Theorem~\ref{th:fast_general}). Focusing on diagonal matrices, Fasino connected in a recent paper \cite{fasino_orthogonal_2023} rank-one perturbations \cite{bunch_rank-one_1978} with Cauchy matrices \cite{golub1985trummer}. Our paper extends his work to arbitrary symmetric matrices. 
 
Specializing this result to the DCT+ family of transforms, we show that the product with the Cauchy matrix can be implemented with precision $\epsilon$ and complexity $O(n\log n + n\log(1/\epsilon))$; as a result, each DCT+ has a fast implementation, or Fast DCT+ (Theorem~\ref{th:fast_line}). The algorithm can be realized using FFT-based methods with appropriate scalings, followed by a non-uniform fast Fourier transform (NUFFT) \cite{dutt1993fast}, which performs computations with a precision $\epsilon$ that can be chosen \emph{a priori}. The use of the NUFFT to compute matrix-vector products with Cauchy matrices is also novel to the best of our knowledge. Moreover, our methodology extends to DTT-related graphs\footnote{Similarly, this extension leads to the family of DTT+ transforms.}, and rank-$k$ updates.

Progressivity can be further used for rate-distortion optimization (RDO) \cite{ortega1998rate}. Modern video codecs \cite{han2021technical, bross2021overview} apply several DCT+ to intra/inter prediction residuals (e.g., DCT and ADST \cite{han2011jointly}) and then select the one optimizing an RD cost \cite{ortega1998rate}. Progressivity allows to apply first the DCT, prune the high-frequency components exploiting energy compaction in the DCT domain \cite{lengwehasatit2004scalable}, and then approximate the remaining DCT+ via Cauchy matrices, reducing the aggregated cost of computing them all in exchange for some approximation errors. Such errors depend on the signal's energy compaction in the DCT and the DCT+ domains (Theorem~\ref{th:pruning_th}).

We characterize the accuracy and runtime of Fast DCT+ considering various graph sizes and three types of rank-one updates. Results show numerically accurate transform computations (SNR above $100$ dB) and confirm the dependency on $O(n\log n)$, with runtimes roughly $8$ times larger than the DCT and speedups over NMVP for graphs with at least $64$ nodes.
We also test the pruning idea in an RDO scenario, demonstrating speedup by a factor of $1.5$. 

\medskip

\noindent \textbf{Notation.} Uppercase bold letters, such as $\bt A$, denote
matrices. Lowercase bold letters, such as $\bt a$, denote vectors.
The $n$th entry of the vector $\bt a$ is $a_n$, and the $(i, j)$th entry of the matrix $\bt A$ is $A_{ij}$. Regular letters denote
scalar values. The subvector of $\bt a$ with entries between positions $n$ and $m$ is $\bt a_{n:m}$. 

\section{Preliminaries}
\begin{definition}[Cauchy matrix \cite{gastinel1960inversion}]
Given two vectors $\bfm\in\mathbb{R}^n$ and $\bfl\in\mathbb{R}^{n}$ with no common entries, we define the Cauchy matrix $\Cml\in\mathbb{R}^{n\times n}$ as $C_{ij} = 1 / \l \mu_i - \lambda_j\r$ for $i, j = 1, \hdots n$.
\end{definition}
In our case, $\bfm$ and $\bfl$ are guaranteed to have no common entries due to the properties of the perturbed Laplacian (cf.~Lemma~\ref{lemma:inter}).
\begin{definition}[GFT \cite{ortega2018graph}]
\label{def:gft}
Let $\mathcal{G} = (\mathcal{V}, \mathcal{E}, \bt W, \bt V)$ be a weighted undirected graph, where $\mathcal{V}$ is the vertex set, $\mathcal{E}$ is the edge set, $\bt W$ is the weighted adjacency, and $\bt V$ is the self-loop matrix. The generalized graph Laplacian is $\bt L \defeq \bt D - \bt W + \bt V$, with $\bt D$ the degree matrix. Let $\bt L = \bt U \diag\bfl \bt U^\top$. Then $\bt U$ is the GFT basis.
\end{definition}
\begin{definition}[Rank-one updates] 
The set of symmetric rank-one updates of $\baseL$ is $\mathcal{P}(\baseL) \defeq \lbrace \bt L\in\mathbb{R}^{n\times n} \colon \bt L = \baseL+\rho \bt v\bt v^\top, \rho \in \mathbb{R}, \bt v \in \mathbb{R}^n \rbrace$. Elements are denoted by $\Lv\in\mathcal{P}(\baseL)$.
\label{def:rank-one-pert}
\end{definition}
\subsection{Rank-one relationships}
The next result states that the Laplacian decomposes into rank-one ``baby Laplacians'', one for each edge  contribution.
\begin{proposition}[\cite{batson2014twice}]
    Let $\bt e_j$ be the $j$th canonical vector. Then, the Laplacian $\bt L$ of an undirected graph can be written as
    \begin{equation}
        \bt L = \sum_{{(i, j)} \in  \mathcal{E}, \, i\not=j}\, w_{ij} (\bt e_i - \bt e_j)(\bt e_i - \bt e_j)^\top
        + \sum_{(i, i)\in\mathcal{E}}\,  w_{ii} \bt e_i\bt e_i^\top,
    \end{equation}
    where $w_{ij}$ denotes the edge weights, for $i, j = 1, \hdots, \vert \mathcal{V} \vert$. 
    \end{proposition}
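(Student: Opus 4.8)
The plan is to verify the identity entrywise by expanding the rank-one outer products on the right-hand side and matching the result against the defining expression $\bt L \defeq \bt D - \bt W + \bt V$ from Definition~\ref{def:gft}. Since every term on the right is supported on at most the two indices of an edge (or the single index of a self-loop), the natural strategy is to compute the contribution of each edge to each matrix entry and then sum.

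First I would expand a single off-diagonal edge term. For $(i,j)\in\mathcal{E}$ with $i\neq j$,
\begin{equation*}
(\bt e_i - \bt e_j)(\bt e_i - \bt e_j)^\top = \bt e_i \bt e_i^\top + \bt e_j \bt e_j^\top - \bt e_i \bt e_j^\top - \bt e_j \bt e_i^\top ,
\end{equation*}
i.e.\ a matrix carrying $+1$ in the $(i,i)$ and $(j,j)$ diagonal positions, $-1$ in the $(i,j)$ and $(j,i)$ positions, and zeros elsewhere. Next I would read off each entry of the full summed right-hand side. For an off-diagonal position $(k,l)$ with $k\neq l$, only the edge $(k,l)$ contributes, giving $-w_{kl}$, which is precisely $-W_{kl}$. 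For a diagonal position $(k,k)$, the edge terms contribute $\sum_{j:\,(k,j)\in\mathcal{E}} w_{kj}$, which equals the degree $D_{kk}$, while the self-loop sum adds $w_{kk}\,\bt e_k\bt e_k^\top$, contributing $V_{kk}$. Collecting these yields $D_{kk}+V_{kk}$ on the diagonal and $-W_{kl}$ off the diagonal, so the summed expression agrees with $\bt D - \bt W + \bt V = \bt L$ entry by entry, which proves the claim.

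The step requiring the most care is the edge-counting convention: one must fix whether the sum over $(i,j)\in\mathcal{E}$ with $i\neq j$ enumerates each undirected edge once (as an unordered pair) or twice (both orderings), and this choice must be consistent with the convention defining the degree $D_{kk}=\sum_j W_{kj}$. Because the off-diagonal matrix is symmetric and $\bt W$ is symmetric, the diagonal contribution must reproduce the degree without double counting; once this bookkeeping is pinned down so that $\sum_{j:\,(k,j)\in\mathcal{E}} w_{kj}=D_{kk}$, the rest is routine algebra on canonical-vector outer products. I do not anticipate a genuine mathematical obstacle here—this is an elementary decomposition—only the need to state the conventions precisely so the diagonal terms match.
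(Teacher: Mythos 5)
Your proof is correct. The paper does not actually prove this proposition—it is imported by citation to Batson et al.—and your entrywise verification (expanding each rank-one term $(\bt e_i - \bt e_j)(\bt e_i - \bt e_j)^\top$ and matching diagonal and off-diagonal entries against $\bt D - \bt W + \bt V$) is the standard elementary argument one would expect; your point about fixing the edge-enumeration convention so that $\sum_{j:\,(k,j)\in\mathcal{E}} w_{kj} = D_{kk}$ without double counting is precisely the only bookkeeping subtlety in the statement.
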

There are two practical scenarios for the rank-one relationships of Def. \ref{def:rank-one-pert} (cf.~Fig.~\ref{fig:path_graph}).
\emph{(i) Adding a self-loop:} For a self-loop of strength $w$, we add a diagonal element to the Laplacian: $\bt L = \baseL+ w \,  \bt e_{i} \bt e_{i}^\top$. Transforms corresponding to path graphs with self-loops \cite{puschel_algebraic_2003} decorrelate optimally intra-predicted residuals under a Gauss-Markov model \cite{han2011jointly}.
\emph{(ii) Modifying an edge weight}: Modifying the weight of the $(i, j)$th edge by $w$ boils down to $\bt L = \baseL + w \, (\bt e_i - \bt e_j)(\bt e_i - \bt e_j)^\top$. Graphs with edge updates appear linked to Lapped transforms \cite{lu2019lapped} and graph learning \cite{lu2024adaptive, egilmez2020graph}.
\begin{definition}[DCT+] We denote by DCT+ the family of GFTs derived from rank-one updates of the Laplacian of the path graph.
\end{definition}
\begin{algorithm}[t]
 \caption{Fast DCT+, (cf.~\eqref{eq:fwd_map}), $\rho > 0$.}
 \label{algo:method_fwd}
 \begin{algorithmic}[1]
 \renewcommand{\algorithmicrequire}{\textbf{Input:}} 
 \REQUIRE Samples $\bt s_i$
  \STATE Set $\bt s_d = \mathrm{DCT}(\bt s)$ via DCT-II.
  \STATE Obtain $\bt s = \diag z \bt s_d$.
  \STATE Compute $p_n = \langle 1 / (\mu_n - \bfl), \bt s\rangle$.
  \STATE Set $h_j = (-1)^{j+1} s_j \sin(j\pi/n) / (1 - \tilde{\lambda}_j^2)$, $j = 1, \hdots, n-1$.
  \STATE Compute $\bt c = \mathrm{DST}(\bt h)$ via DST-I.
  \STATE Compute $\bt q = -1/2\, \mathrm{NFST}_{\tbfm}(\bt c) / \sin(n\acos(\tbfm))$.
  \STATE Compensate: $\bt p_{1:n-1} = (\bt q + s_1/\bfm_{1:n-1})$.
 \RETURN $\bt p_a = \diag a \bt p$.
 \end{algorithmic}
 \end{algorithm}
\section{Main results}
In the following, we let $\baseL = \bt U \diag\bfl\bt U^\top$ be a symmetric matrix, and $\Lv = \bt X\diag\bfm\bt X^\top \in \mathcal{P}(\baseL)$.
We show that the eigenvector bases before and after a rank-one update are related by a Cauchy matrix. 
To the best of our knowledge, this is the first result stating this relationship for arbitrary symmetric matrices, generalizing  recent ideas for diagonal matrices \cite{fasino_orthogonal_2023}.
\begin{theorem}
\label{thm:main_eq}
    The eigenvector basis $\bt X$ decomposes as
    \begin{equation}
        \label{eq:mapping}
        \bt X = \bt U \diag z \Clm\diag a,
    \end{equation}
    where $\bt z \defeq \bt U^\top \bt v$ and $\bt a\in\mathbb{R}^n$  normalizes each column. Moreover,
    \begin{equation}
        \label{eq:fwd_map}
        \bt X^\top = - \diag a \Cml \diag z \bt U^\top.
    \end{equation}    
\end{theorem}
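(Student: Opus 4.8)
The plan is to start from the eigenvalue equation for the perturbed matrix and solve for each eigenvector explicitly through the resolvent of $\baseL$. Let $\bt x_j$ be the eigenvector of $\Lv = \baseL + \rho\pert\pert^\top$ with eigenvalue $\mu_j$. Expanding $\l\baseL + \rho\pert\pert^\top\r\bt x_j = \mu_j\bt x_j$ and rearranging gives $\l\baseL - \mu_j\bt I\r\bt x_j = -\rho\l\pert^\top\bt x_j\r\pert$. Since $\pert^\top\bt x_j$ is a scalar, this shows that $\bt x_j$ is proportional to $\l\baseL - \mu_j\bt I\r^{-1}\pert$, where the inverse exists because $\mu_j$ is not an eigenvalue of $\baseL$; this last fact is precisely the strict-separation consequence of the interlacing property (Lemma~\ref{lemma:inter}), which also guarantees $\bfm$ and $\bfl$ share no entries so that the Cauchy matrices are well defined.

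Next I would substitute the eigendecomposition $\baseL = \bt U\diag\bfl\bt U^\top$ to expose the Cauchy structure. Because $\l\baseL - \mu_j\bt I\r^{-1} = \bt U\l\diag\bfl - \mu_j\bt I\r^{-1}\bt U^\top$ and $\diag\bfl - \mu_j\bt I$ is diagonal with entries $\lambda_i - \mu_j$, the $i$th component of $\bt U^\top\bt x_j$ is proportional to $z_i/\l\lambda_i - \mu_j\r$, where $\bt z = \bt U^\top\pert$. The factor $1/\l\lambda_i - \mu_j\r$ is exactly $\elem{\Clm}{i}{j}$, so collecting the columns $j = 1,\hdots,n$ yields $\bt U^\top\bt X = \diag z\,\Clm\,\diag a$ for some diagonal scaling $\diag a$; left-multiplying by $\bt U$ gives \eqref{eq:mapping}. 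The entries of $\bt a$ are the per-column proportionality constants, fixed by the requirement that each $\bt x_j$ be unit norm: since $\bt U$ is orthogonal this is $a_j^2\sum_i z_i^2/\l\lambda_i - \mu_j\r^2 = 1$, so $\bt a$ is the column normalizer as stated.

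Finally, \eqref{eq:fwd_map} follows purely algebraically by transposing \eqref{eq:mapping}, using that diagonal matrices are symmetric: $\bt X^\top = \diag a\,\Clm^\top\,\diag z\,\bt U^\top$. The key identity is the antisymmetry $\elem{\Clm}{i}{j} = 1/\l\lambda_i - \mu_j\r = -1/\l\mu_j - \lambda_i\r = -\elem{\Cml}{j}{i}$, i.e.\ $\Clm^\top = -\Cml$, which turns the transpose into $\bt X^\top = -\diag a\,\Cml\,\diag z\,\bt U^\top$. (Orthogonality of $\bt X$ is what makes this transpose also the inverse transform, but it is not needed for the identity itself.)

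I expect the main obstacle to be the nonsingularity of $\l\baseL - \mu_j\bt I\r^{-1}$, equivalently the guarantee that no perturbed eigenvalue coincides with an unperturbed one. In the generic case this is delivered by Lemma~\ref{lemma:inter}, but a careful proof must also address the degenerate situations in which some $z_i = 0$ or $\baseL$ has repeated eigenvalues: there, certain eigenvalues stay fixed and their eigenvectors are inherited directly from the corresponding columns of $\bt U$. These locked directions should be split off (via deflation or a genericity assumption) so that the Cauchy-matrix argument applies to the remaining, strictly interlaced block.
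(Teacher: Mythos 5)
Your proof is correct and follows essentially the same route as the paper: where the paper cites the Bunch--Nielsen--Sorensen eigenvector formula (its Lemma~\ref{lemma:vec_decomp}), you re-derive it from the eigenvalue equation via the resolvent, and both arguments then read off the Cauchy structure of $\bt U^\top \bt X$ and obtain \eqref{eq:fwd_map} from the antisymmetry $\Clm^\top = -\Cml$. Your closing caveat about the degenerate cases ($z_i = 0$ or repeated eigenvalues) matches the paper's standing assumption at the start of Section~\ref{sec:prelem}, where deflation is applied first.
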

When $\Lv$ is the Laplacian of a graph, this result offers a \emph{progressive factorization} of the GFT:  we apply first the unperturbed Laplacian's GFT and then the Cauchy matrix. 
 
The next theorem builds upon the previous result: we can rely on the existence of $O(n \log^2 n)$ \cite{gerasoulis1988fast} algorithms to compute matrix-vector products with Cauchy matrices to speed up the transform.

\begin{theorem} 
\label{th:fast_general}
    Assume that $\bt U$ has a $O(f(n))$ matrix-vector product algorithm. Then $\bt X$ has a 
    $O(n \log^2 n + f(n))$ matrix-vector product algorithm. 
\end{theorem}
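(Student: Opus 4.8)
The plan is to read off the factorization $\bt X = \bt U \diag z \Clm \diag a$ supplied by Theorem~\ref{thm:main_eq} and to evaluate the matrix-vector product $\bt X \bt y$ by applying the four factors one at a time, from right to left, bounding the arithmetic cost incurred at each stage and summing. Since $\bt z = \bt U^\top \bt v$ and the normalizer $\bt a$ depend only on the unperturbed basis $\bt U$ and on the perturbation $(\bt v, \rho)$, and not on the input $\bt y$, I would treat them as precomputed during a one-time setup, so that the complexity in the statement refers to the per-vector transform cost once the factors are fixed.

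Concretely, given $\bt y\in\mathbb{R}^n$, I would form $\bt y_1 = \diag a\,\bt y$, a diagonal scaling costing $O(n)$; then $\bt y_2 = \Clm\,\bt y_1$, a Cauchy matrix-vector product; then $\bt y_3 = \diag z\,\bt y_2$, again $O(n)$; and finally $\bt X\bt y = \bt U\,\bt y_3$, which costs $O(f(n))$ by the hypothesis that $\bt U$ admits an $O(f(n))$ matrix-vector routine. The adjoint transform $\bt X^\top \bt y$ is handled identically via \eqref{eq:fwd_map}, the same three factor types appearing in reverse order. Adding the per-stage costs gives $O(n) + (\text{Cauchy product}) + O(n) + O(f(n))$, so the overall complexity is governed by the Cauchy product together with $f(n)$, the two diagonal scalings being subdominant.

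The only nontrivial ingredient, and the step I would flag as the crux, is bounding the Cauchy product $\Clm\bt y_1$. Here I would invoke the existence of fast Cauchy matrix-vector algorithms, which achieve $O(n\log^2 n)$ complexity \cite{golub1985trummer, gerasoulis1988fast}; these presuppose that the node sets $\bfm$ and $\bfl$ are disjoint, so that every entry $1/(\mu_i - \lambda_j)$ is well defined, a condition guaranteed by the interlacing property in Lemma~\ref{lemma:inter}. With this bound in hand, the two $O(n)$ diagonal scalings are absorbed and, combined with the $O(f(n))$ application of $\bt U$, yield the claimed $O(n\log^2 n + f(n))$ cost for computing $\bt X\bt y$ (and symmetrically $\bt X^\top\bt y$). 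Because the diagonal factors never dominate, no further estimates are needed beyond correctly chaining the factors of Theorem~\ref{thm:main_eq}.
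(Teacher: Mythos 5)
Your proposal is correct and follows essentially the same route as the paper: apply the factorization of Theorem~\ref{thm:main_eq} factor by factor, with the diagonal scalings costing $O(n)$, the application of $\bt U$ costing $O(f(n))$ by hypothesis, and the Cauchy matrix-vector product costing $O(n\log^2 n)$ via Gerasoulis~\cite{gerasoulis1988fast}. Your additional remarks---that $\bt z$ and $\bt a$ are precomputed at setup and that Lemma~\ref{lemma:inter} guarantees the Cauchy matrix is well defined---are consistent with the paper's own assumptions (cf.~its Remark on setup-stage quantities).
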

In the context of GFTs, this theorem asserts that any transform derived from a rank-one update of a graph can be computed in $O(n\log^2 n)$ once the coefficients in the unperturbed transform domain are available. We now focus on the DCT+ case.
\begin{theorem} 
\label{th:fast_line}
    Let $\baseL$ be the Laplacian of the path graph. Then, an algorithm to compute the matrix-vector product $\bt X^\top \bt s_i$ with precision $\epsilon$ can be implemented in $O(n\log n + n\log(1/\epsilon))$ for any  $\bt s_i\in\mathbb{R}^n$.   
\end{theorem}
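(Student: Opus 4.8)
The plan is to start from the factorization $\bt X^\top = - \diag a \Cml \diag z \bt U^\top$ of Theorem~\ref{thm:main_eq} and show that, for the path graph, every factor except the Cauchy matrix is already fast. Since $\baseL$ is the path-graph Laplacian, $\bt U$ is the DCT-II, so $\bt U^\top \bt s_i$ costs $O(n\log n)$ via a standard fast DCT, while the diagonal scalings $\diag z$ and $\diag a$ cost $O(n)$. Writing $\bt s \defeq \diag z\, \bt U^\top \bt s_i$, the whole problem reduces to evaluating the Cauchy matrix-vector product $\Cml \bt s$, whose $n$th entry is $p_n = \sum_j s_j/(\mu_n - \lambda_j)$, within $O(n\log n + n\log(1/\epsilon))$.

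The key step is to exploit the closed form of the path-graph spectrum. Writing $\lambda_j = 2 - 2\cos\theta_j$ with $\theta_j = j\pi/n$ and $\tilde\mu_n \defeq 1 - \mu_n/2 = \cos\phi_n$, each denominator becomes $\mu_n - \lambda_j = -2(\cos\phi_n - \cos\theta_j)$, so $\Cml$ is a Cauchy matrix on the cosine grid. I would first isolate the term $j=0$ (where $\theta_0 = 0$ and $\sin\theta_0 = 0$), which is singular for the identity below and is instead absorbed into an $O(n)$ compensation proportional to $s_0/(\tilde\mu_n - 1)$. For the interior terms I would invoke Lagrange interpolation at the roots $\cos\theta_j$ of the second-kind Chebyshev polynomial $U_{n-1}$, using $U_{n-1}(\cos\theta) = \sin(n\theta)/\sin\theta$ and the nodal derivative $U_{n-1}'(\cos\theta_j) = n(-1)^{j+1}/\sin^2\theta_j$. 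This rewrites
\begin{equation}
    \sum_{j=1}^{n-1}\frac{s_j}{\cos\phi_n - \cos\theta_j} = \frac{n\sin\phi_n}{\sin(n\phi_n)}\,P(\cos\phi_n),
\end{equation}
where $P$ is the unique polynomial of degree at most $n-2$ with nodal values $P(\cos\theta_j) = (-1)^{j+1} s_j / \sin^2\theta_j$.

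It remains to evaluate $P(\cos\phi_n)$ fast. Representing $P$ in the Chebyshev-$U$ basis, its coefficients are recovered from the nodal values through the discrete orthogonality of the $U_m$ on the grid $\{\cos\theta_j\}$; because $\sin(k\theta_j) = \sin\theta_j\, U_{k-1}(\cos\theta_j)$, this coefficient extraction is exactly a DST-I applied to the rescaled samples $h_j = (-1)^{j+1} s_j/\sin\theta_j$, costing $O(n\log n)$. Finally, since $U_m(\cos\phi)\sin\phi = \sin((m+1)\phi)$, evaluating $P$ at the non-uniform frequencies $\phi_n = \arccos(\tilde\mu_n)$ amounts to a non-uniform sine transform of the coefficient vector; this is computed to a prescribed accuracy $\epsilon$ by a NUFFT in $O(n\log n + n\log(1/\epsilon))$. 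Dividing by $\sin(n\phi_n)$ and adding back the separately computed $j=0$ compensation yields $\bt p$, and a final $O(n)$ scaling by $\diag a$ produces $\bt X^\top \bt s_i$; these are precisely the steps of Algorithm~\ref{algo:method_fwd}.

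Summing the costs — one DCT-II, one DST-I, one NUFFT, and $O(n)$ diagonal scalings and compensations — gives the claimed $O(n\log n + n\log(1/\epsilon))$, with the precision $\epsilon$ entering through the NUFFT alone, all other operations being exact up to floating point. I expect the main obstacle to be deriving the interpolation identity in the displayed equation and, in particular, correctly isolating and compensating the singular boundary contribution where $\sin\theta_j = 0$; once this exact trigonometric factorization into (DST-I)--(diagonal)--(NUFFT) is in place, the complexity bound is a routine accounting of the known costs of the fast cosine, sine, and non-uniform Fourier transforms.
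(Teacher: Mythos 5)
Your pipeline---DCT-II for $\bt U^\top$, isolation of the $\lambda_1 = 0$ column, Lagrange interpolation at the roots of $U_{n-1}$, DST-I coefficient extraction on the rescaled samples $(-1)^{j+1}s_j/\sin\theta_j$, and NUFFT/NFST evaluation followed by division by $\sin(n\phi)$---is essentially the paper's own proof (the paper phrases the interpolation step via Gastinel's $f = h/g$ identity with $g = d_n U_{n-1}$, which is the same computation as your nodal-derivative formula). However, there is one genuine gap: you assume every $\tilde\mu_i = 1 - \mu_i/2$ can be written as $\cos\phi_i$, i.e., that all perturbed eigenvalues lie in $[0,4]$. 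By the interleaving lemma (Lemma~\ref{lemma:inter}), for $\rho > 0$ the extreme eigenvalue satisfies $\mu_n > \lambda_n$, and nothing prevents it from exceeding $4$ (e.g., a self-loop of large weight pushes $\mu_n$ arbitrarily far beyond the path graph's spectral range); in that case $\tilde\mu_n < -1$, $\arccos(\tilde\mu_n)$ is undefined, and the NFST evaluation of that row simply cannot be performed. For $\rho < 0$ the same failure occurs at the other end, since then $\mu_1 < \lambda_1 = 0$ gives $\tilde\mu_1 > 1$. This is precisely why the paper's pre-processing stage isolates \emph{two} things, not one: the column of $\lambda_1 = 0$ (which you handle) and the row of $\mu_n$ (which you do not), the latter because, as the paper puts it, it ``lies outside the evaluation range of Chebyshev polynomials.''

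The fix is cheap and is what the paper does: compute the offending entry directly as an $O(n)$ inner product, $p_n = \langle 1/(\mu_n - \bfl), \bt s\rangle$, and run the Chebyshev/NFST machinery only on the reduced block $\bt C(\bfm_{1:n-1}, \bfl_{2:n})$, whose row nodes are guaranteed by interleaving to satisfy $0 < \mu_i < \lambda_n < 4$ for $i \leq n-1$, hence $\tilde\mu_i \in (-1,1)$ after the change of variables. Adding this single $O(n)$ row compensation to your argument closes the gap without changing the claimed $O(n\log n + n\log(1/\epsilon))$ complexity.
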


The proof of this Theorem is constructive: our forward transform is summarized in Algorithm~\ref{algo:method_fwd}. The backward transform can be implemented by step-wise inversion. Hence, there exist Fast DCT+ algorithms, up to some predetermined precision \cite{dutt1993fast}. 

Progressivity can also be combined with pruning when $k$ DCT+ have to be computed: we can keep only the low-frequency terms of the DCT and approximate the DCT+ via Cauchy matrices, as shown in Fig.~\ref{fig:diag_pruning}. The next result describes the complexity and error.
\begin{theorem}
\label{th:pruning_th} 
Let $\tilde{\bt C}(\bfm_r, \bfl) \doteq \diag {a_r} \bt C(\bfm_r, \bfl)\diag {z_r}$, and define $\tilde{\bt C}_{hl}(\bfm_r, \bfl)$ as its lower-left block of size $(n-c_p)\times c_p$, for $r = 1, \hdots, k$. Let $\hat{\bt C}(\bfm_r, \bfl)\in\mathbb{R}^{n\times n}$ be equal to $\tilde{\bt C}(\bfm_r, \bfl)$ in the $c_p\times c_p$ upper-left block and zero elsewhere.  Then,
$$
 \norm{\tilde{\bt C}(\bfm_r, \bfl)\bt s - \hat{\bt C}(\bfm_r, \bfl)\bt s}_2 \leq \norm{\vphantom{\tilde{\bt C}_{hl}(\bfm, \bfl)}\bt s_{c_p:n}}_2 + \norm{\tilde{\bt C}_{hl}(\bfm_r, \bfl)\bt s_{1:c_p}}_2,$$ with $\bt s \doteq \mathrm{DCT}(\bt s_i)$ and $\bt s_i\in\mathbb{R}^n$.  We can approximate the $k$ DCT+ with complexity $O(n\log n + (k-1)c_p^2)$.
\end{theorem}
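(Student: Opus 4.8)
The plan rests on a structural fact that is implicit in Theorem~\ref{thm:main_eq} but not stated there: for each update index $r$, the scaled Cauchy matrix $\tilde{\bt C}(\bfm_r, \bfl) = \diag{a_r}\bt C(\bfm_r, \bfl)\diag{z_r}$ is \emph{orthogonal}. To see this, I would instantiate \eqref{eq:fwd_map} for the $r$-th update, giving $\bt X_r^\top = -\diag{a_r}\bt C(\bfm_r, \bfl)\diag{z_r}\,\bt U^\top$, and right-multiply by $\bt U$; using $\bt U^\top\bt U = \bt I$ this collapses to $\tilde{\bt C}(\bfm_r, \bfl) = -\bt X_r^\top\bt U$. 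Since $\bt X_r$ and $\bt U$ are both orthonormal GFT bases, their product is orthogonal and the sign is irrelevant, so $\tilde{\bt C}(\bfm_r, \bfl)$ is orthogonal. This single observation drives the entire bound.

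Fixing $r$ and dropping the argument, I would partition $\tilde{\bt C}$ conformally with the split of $\bt s$ into its low part $\bt s_{1:c_p}$ and high part $\bt s_{c_p:n}$, into four blocks $\tilde{\bt C}_{ll}$ (the retained $c_p\times c_p$ upper-left block), $\tilde{\bt C}_{lh}$, $\tilde{\bt C}_{hl}$ (the lower-left block named in the statement), and $\tilde{\bt C}_{hh}$. Because $\hat{\bt C}\bt s = [\,\tilde{\bt C}_{ll}\bt s_{1:c_p};\ \bt 0\,]$, the residual regroups as
\[
\tilde{\bt C}\bt s - \hat{\bt C}\bt s = \begin{bmatrix}\tilde{\bt C}_{lh}\\ \tilde{\bt C}_{hh}\end{bmatrix}\bt s_{c_p:n} + \begin{bmatrix}\bt 0\\ \tilde{\bt C}_{hl}\,\bt s_{1:c_p}\end{bmatrix},
\]
and the triangle inequality splits the norm into the two terms appearing on the right-hand side of the claim.

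The first term is where orthogonality pays off: $[\tilde{\bt C}_{lh};\,\tilde{\bt C}_{hh}]$ is exactly the selection of the last $n-c_p$ columns of the orthogonal matrix $\tilde{\bt C}$, so it has orthonormal columns and acts isometrically, giving $\norm{[\tilde{\bt C}_{lh};\,\tilde{\bt C}_{hh}]\,\bt s_{c_p:n}}_2 = \norm{\bt s_{c_p:n}}_2$. The second term is simply $\norm{\tilde{\bt C}_{hl}(\bfm_r, \bfl)\,\bt s_{1:c_p}}_2$; I would leave it explicit rather than bound it further by $\norm{\bt s_{1:c_p}}_2$, since it is precisely the low-to-high frequency leakage and encodes the signal's energy compaction in the DCT+ domain. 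Adding the two terms yields the stated inequality.

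For the complexity, progressivity lets me compute $\bt s = \mathrm{DCT}(\bt s_i)$ once in $O(n\log n)$ and reuse it across all $k$ candidates; the DCT being itself one of them, it costs nothing further. Each of the remaining $k-1$ pruned transforms $\hat{\bt C}(\bfm_r, \bfl)\bt s$ reads only $\bt s_{1:c_p}$ and multiplies it by the dense $c_p\times c_p$ block $\tilde{\bt C}_{ll}$ at cost $O(c_p^2)$, accumulating to $O(n\log n + (k-1)c_p^2)$. I expect the only genuinely delicate point to be keeping straight which sub-block is isometric: selecting \emph{columns} of an orthogonal matrix preserves orthonormality and produces the clean $\norm{\bt s_{c_p:n}}_2$ term, whereas the \emph{row}-truncated block $\tilde{\bt C}_{hl}$ admits no such isometry and must therefore remain as a data-dependent term. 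Everything else is block bookkeeping plus the triangle inequality.
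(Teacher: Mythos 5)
Your proof is correct and supplies exactly the ``algebraic manipulations'' the paper leaves unstated: the conformal block partition plus the triangle inequality produces the two terms, and your key observation---that $\tilde{\bt C}(\bfm_r, \bfl) = -\bt X_r^\top \bt U$ is orthogonal by \eqref{eq:fwd_map}, so its last $n-c_p$ columns act isometrically on $\bt s_{c_p:n}$---is precisely the fact needed to collapse the first term to $\norm{\bt s_{c_p:n}}_2$, matching the paper's interpretation of that term as DCT-domain energy compaction and of $\tilde{\bt C}_{hl}(\bfm_r, \bfl)\bt s_{1:c_p}$ as low-to-high frequency leakage. Your complexity accounting (one shared $O(n\log n)$ DCT, with the DCT itself among the $k$ candidates, plus $k-1$ dense products with the retained $c_p\times c_p$ block) likewise matches the paper's claim of $O(n\log n + (k-1)c_p^2)$.
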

The first term of the bound depends on the energy compaction in the DCT domain, while the second term is the leakage from low-frequencies in DCT domain to high-frequencies in the DCT+ domain. The  result follows from algebraic manipulations.

\begin{remark}
    The results and complexity claims above assume that the following variables are available from the setup stage: $\bfl$, $\bt U$ (or the DCT coefficients), $\bfm$ (obtained in $O(n)$ by solving a secular equation \cite{golub1973some}), $\bt z$ (obtained in $O(n\log n)$), and $\bt a$ (obtained in $O(n^2)$). 
\end{remark}
\section{Proofs and extensions}
\label{sec:prelem}
Let $\baseL$ and $\Lv$ be defined as before. Moreover, we assume  $\baseL$ does not have repeated eigenvalues and $\bt z = \bt U^\top\bt v$ does not have zero components. Otherwise, we  can apply deflation first \cite{gerasoulis1988fast}.

\subsection{Proofs of Theorem \ref{thm:main_eq} and Theorem \ref{th:fast_general}}
We state first an interleaving lemma.
\begin{lemma}[\cite{thompson1976behavior}]
\label{lemma:inter}
The eigenvalues of $\Lv$ satisfy
$\lambda_1 < \mu_1 < \lambda_2 < \cdots < \lambda_n < \mu_n$,
if $\rho>0$, reversing the order if $\rho < 0$.
\end{lemma}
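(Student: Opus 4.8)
The plan is to reduce the eigenvalue problem for $\Lv$ to a scalar \emph{secular equation} and then read off the interlacing from the monotonicity and pole structure of the resulting rational function. First I would diagonalize the unperturbed matrix: writing $\baseL = \bt U\diag\bfl\bt U^\top$ and conjugating by $\bt U$, a scalar $\mu$ is an eigenvalue of $\Lv = \baseL + \rho\bt v\bt v^\top$ if and only if $\det(\diag\bfl - \mu\bt I + \rho\bt z\bt z^\top) = 0$, where $\bt z = \bt U^\top\bt v$ as in Theorem~\ref{thm:main_eq}. For $\mu$ distinct from every $\lambda_i$ the factor $\diag\bfl - \mu\bt I$ is invertible, so the matrix determinant lemma gives
\begin{equation}
\det(\diag\bfl - \mu\bt I + \rho\bt z\bt z^\top) = \prod_{i=1}^n(\lambda_i - \mu)\,\Bigl(1 + \rho\sum_{i=1}^n\frac{z_i^2}{\lambda_i - \mu}\Bigr).
\end{equation}
Since the standing assumptions guarantee $z_i \neq 0$ for all $i$, the eigenvalues of $\Lv$ that differ from the $\lambda_i$ are exactly the roots of the secular function $f(\mu) \defeq 1 + \rho\sum_{i=1}^n z_i^2/(\lambda_i - \mu)$.

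The core of the argument is a one-variable analysis of $f$. I would compute $f'(\mu) = \rho\sum_i z_i^2/(\lambda_i - \mu)^2$, which carries the sign of $\rho$ and never vanishes, so $f$ is strictly monotone on each interval between consecutive poles. Using the hypothesis that $\baseL$ has distinct eigenvalues, order them $\lambda_1 < \cdots < \lambda_n$; as $\mu$ approaches a pole $\lambda_i$ from the left the term $z_i^2/(\lambda_i - \mu)$ blows up to $+\infty$ and from the right to $-\infty$, while $f(\mu) \to 1$ as $\mu\to\pm\infty$. For $\rho > 0$, $f$ increases from $-\infty$ to $+\infty$ across each interval $(\lambda_i, \lambda_{i+1})$ and so has a unique root $\mu_i$ there; on $(\lambda_n, +\infty)$ it climbs from $-\infty$ up to $1$, contributing the final root $\mu_n$; and on $(-\infty, \lambda_1)$ it stays above $1$, yielding no root. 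Collecting these gives exactly $n$ simple roots obeying $\lambda_1 < \mu_1 < \lambda_2 < \cdots < \lambda_n < \mu_n$. The case $\rho < 0$ is identical with the monotonicity reversed: each root now sits below the corresponding pole, producing the reversed chain $\mu_1 < \lambda_1 < \cdots < \mu_n < \lambda_n$.

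I expect the main obstacle to be careful bookkeeping at the endpoints of the sign/pole analysis rather than any deep step: one must verify that the outermost intervals contribute the correct number of roots (one on the unbounded side toward which $\rho$ displaces the spectrum, none on the other) and confirm that the $n$ roots found exhaust the spectrum. The latter follows because strict monotonicity allows at most one root per interval between poles, so the $n$ roots match the degree-$n$ characteristic polynomial. The strictness of the inequalities relies on having $z_i \neq 0$ for every $i$ and the $\lambda_i$ all distinct; these are precisely the degenerate configurations that the stated assumptions exclude, with deflation invoked to reduce to this generic case whenever they fail.
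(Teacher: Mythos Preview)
Your argument is correct and is in fact the classical secular-equation proof of the rank-one interlacing theorem. The paper, however, does not prove this lemma at all: it simply states it and attributes it to Thompson~\cite{thompson1976behavior}, using it only to guarantee that $\mu_i\bt I - \diag\bfl$ is nonsingular. So there is no ``paper's own proof'' to compare against; your proposal supplies what the paper omits, and the secular-function analysis you outline (monotonicity of $f$ on each inter-pole interval, one root per interval, the unbounded interval on the $\rho$-side contributing the extremal $\mu$) is exactly the standard route. One small point worth making explicit in a write-up: you should check that no $\lambda_i$ is itself an eigenvalue of $\Lv$, since the matrix determinant lemma only covers $\mu\neq\lambda_i$; this follows immediately by evaluating the characteristic polynomial $\prod_j(\lambda_j-\mu)+\rho\sum_i z_i^2\prod_{j\neq i}(\lambda_j-\mu)$ at $\mu=\lambda_k$ and using $z_k\neq 0$.
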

Thus, $\mu_i \bt I - \diag{\bfl}$, for $i = 1, \hdots, n$, will be non-singular. Now, we relate the eigenvectors of $\baseL$ and  $\Lv$.
\begin{lemma}[\cite{bunch_rank-one_1978}]
\label{lemma:vec_decomp}
    Let $\bt z \defeq \bt U^\top \bt v$. Then,
        $\bt x_i = - a_i \bt U(\mu_i \bt I - \diag\bfl)^{-1}\bt z.$
\end{lemma}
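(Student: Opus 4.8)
The plan is to derive the eigenvector formula directly from the defining eigenvalue equation for $\Lv$ and then reduce it to the unperturbed eigenbasis. Starting from $\Lv \bt x_i = \mu_i \bt x_i$ and substituting the rank-one form $\Lv = \baseL + \rho\, \bt v \bt v^\top$, I would introduce the scalar $\gamma_i \defeq \bt v^\top \bt x_i$ so that the equation rearranges to $(\mu_i \bt I - \baseL)\,\bt x_i = \rho\,\gamma_i\,\bt v$.

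The central step is to invert $\mu_i \bt I - \baseL$. Lemma~\ref{lemma:inter} guarantees that each $\mu_i$ strictly interlaces the eigenvalues $\bfl$ of $\baseL$, so $\mu_i$ is never itself an eigenvalue of $\baseL$ and the matrix $\mu_i \bt I - \baseL$ is nonsingular. Substituting the spectral factorization $\baseL = \bt U \diag\bfl \bt U^\top$ and using orthonormality $\bt U^\top \bt U = \bt I$, I obtain $\mu_i \bt I - \baseL = \bt U(\mu_i \bt I - \diag\bfl)\bt U^\top$, and therefore $(\mu_i \bt I - \baseL)^{-1} = \bt U(\mu_i \bt I - \diag\bfl)^{-1}\bt U^\top$. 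Applying this to the rearranged equation and recognizing $\bt z = \bt U^\top \bt v$ yields $\bt x_i = \rho\,\gamma_i\, \bt U(\mu_i \bt I - \diag\bfl)^{-1}\bt z$. Setting $a_i \defeq -\rho\,\gamma_i$ then gives exactly the claimed expression.

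The subtlety I would need to address is that this argument is only meaningful when $\gamma_i \neq 0$. I would handle it by contradiction: if $\gamma_i = \bt v^\top \bt x_i = 0$, the eigenvalue equation collapses to $\baseL \bt x_i = \mu_i \bt x_i$, forcing $\mu_i$ to be an eigenvalue of $\baseL$ and contradicting the strict interlacing of Lemma~\ref{lemma:inter}. Hence $\gamma_i \neq 0$, and the scalar $a_i$ is fixed (up to sign) by imposing $\norm{\bt x_i}_2 = 1$, which is precisely the column-normalizing role that $\bt a$ plays in Theorem~\ref{thm:main_eq}.

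The main obstacle is not the algebra, which is a short rearrangement followed by diagonalization, but verifying the two non-degeneracy conditions that make the formula well-defined: invertibility of $\mu_i \bt I - \diag\bfl$ and non-vanishing of $\gamma_i$. Both follow from the strict interlacing in Lemma~\ref{lemma:inter}, so once that result is in hand the derivation is essentially immediate.
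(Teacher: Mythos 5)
Your proof is correct and is precisely the standard derivation underlying the result the paper invokes by citation from \cite{bunch_rank-one_1978} (the paper states this lemma without reproving it): rearrange the eigenvalue equation to $(\mu_i \bt I - \baseL)\bt x_i = \rho(\bt v^\top \bt x_i)\bt v$, invert via the spectral factorization of $\baseL$, and absorb $\rho\,\bt v^\top\bt x_i$ into the normalizing constant $a_i$. Your two non-degeneracy checks (nonsingularity of $\mu_i\bt I - \diag\bfl$ and $\bt v^\top\bt x_i \neq 0$, both from the strict interlacing of Lemma~\ref{lemma:inter}) are exactly what the paper's standing assumptions in Section~\ref{sec:prelem} (distinct eigenvalues of $\baseL$, no zero entries in $\bt z$, else deflation) are there to guarantee.
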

To prove Theorem \ref{thm:main_eq}, let $\bt Y \defeq \bt U^\top \bt X$. Then, $Y_{j, i} = (a_iz_j)/(\lambda_j - \mu_i)$. Thus, $\bt Y = \diag z \bt C(\bfl, \bfm) \diag a$. Since $\bt X = \bt U \bt Y$ and $\Cml = -\Clm^\top$, results \eqref{eq:mapping} and \eqref{eq:fwd_map} follow.

Now, Theorem \ref{th:fast_general}  follows from \eqref{eq:mapping}: products with $\diag a$ and $\diag z$ can be computed in $O(n)$, products with $\bt U$ in $O(f(n))$ by assumption, and from Gerasoulis \cite{gerasoulis1988fast} we can compute matrix-vector products with Cauchy matrices in $O(n\log^2 n)$. 
\begin{figure}
    \centering
    \includegraphics[width=\linewidth]{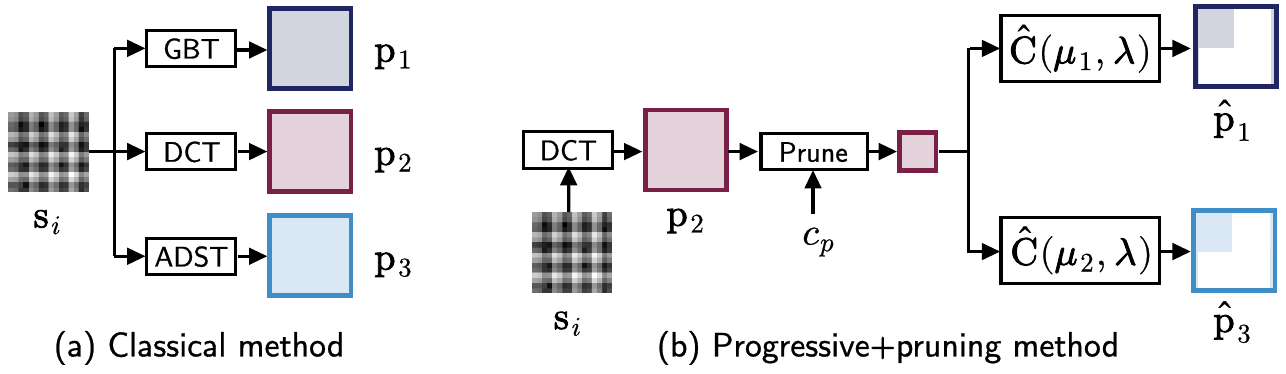}
    \caption{(a) Classical video coding approach to compute transforms for RDO. GBT is a graph-based tranform. In the best case, this approach has complexity $O(kn\log n)$. (b) Pruning method exploiting progressivity, with complexity $O(n\log n + (k-1)c_p^2)$.}
    \label{fig:diag_pruning}
    \vspace{-1em}
\end{figure}
\subsection{Proof of Theorem \ref{th:fast_line}}
\label{sec:line_proof}
We focus on the forward transform. The product with $\bt U^\top$ is in $O(n\log n)$ via DCT. Thus, we consider $\bt p = \Cml \bt s$, where $\bt s$ are the transform coefficients in DCT domain. 

Our algorithm has three stages: 1) pre-process the Cauchy matrix, 2) reduce the NMVP to a trigonometric polynomial, and 3) use the NUFFT \cite{dutt1993fast} to evaluate the polynomial. We assume $\rho>0$ in Def.~\ref{def:rank-one-pert}; the case $\rho < 0$ follows similarly.

\subsubsection{Pre-processing}
The pre-processing stage isolates $\lambda_1 = 0$ and $\mu_n$; the former cannot be transformed into a Chebyshev polynomial root, and the latter lies outside the evaluation range of Chebyshev polynomials (cf.~Lemma~\ref{lemma:inter}). The matrix-vector product can be written as:
\begin{equation}
    \bt p = \begin{bmatrix}
        1/\bfm_{1:n-1} & \bt C(\bfm_{1:n-1}, \bfl_{2:n}) \\ 
        1 / \mu_n & 1 / (\mu_n - \bfl_{2:n}^\top)
    \end{bmatrix}\begin{bmatrix}
        s_1 \\ 
        \bt s_{2:n}
    \end{bmatrix}.
\end{equation}
Note that $p_n = \langle 1 / (\mu_n - \bfl), \bt s\rangle$ and $\bt p_{1:n-1} = \bt C(\bfm_{1:n-1}, \bfl_{2:n})\bt s_{2:n} + s_1/\bfm_{1:n-1}$. Let $\bt q = \bt C(\bfm_{1:n-1}, \bfl_{2:n})\bt s_{2:n}$.

The roots of $U_{n-1}(\cdot)$, the Chebyshev polynomial of the second kind and order $n-1$, satisfy $y_k = 1 - \lambda_{k+1}/2$, for $k = 1, \hdots, n-1$. Let us introduce $\tbfm \defeq 1 - \bfm_{1:n-1}/2$ and $\tbfl \defeq 1 - \bfl_{2:n}/2$. By properties of Cauchy matrices, $\bt q = -1/2 \, \bt C(\tbfm, \tbfl)\bt s_{2:n}$. 

\subsubsection{Reduction to trigonometric polynomials}
We rewrite the matrix-vector product as a trigonometric polynomial. Let $q_i = f(\tilde{\mu_i}) = \sum_{j=1}^{n-1} \, s_j / (\tilde{\mu_i} - \tilde{\lambda_j})$ for $i = 1, \hdots, n-1$. Gastinel \cite{gastinel1960inversion} stated the following result about $f(x)$.
\begin{proposition}[\cite{gastinel1960inversion}]
    Let $g(x) \defeq \prod_{j = 1}^{n-1}(x - \tilde{\lambda}_j)$, and define $h(x)$ as a polynomial of degree $n-2$ such that $h(\tilde{\lambda}_j) = s_j g'(\tilde{\lambda}_j)$, for $j = 1, \hdots, n-1$. Then, $f(x) = h(x)/g(x)$.
\end{proposition}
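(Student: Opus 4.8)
The plan is to verify the claim by collecting the partial-fraction sum $f$ over a common denominator and then identifying the resulting numerator with the interpolating polynomial $h$. First I would place all $n-1$ terms of $f(x) = \sum_{j=1}^{n-1} s_j/(x - \tilde{\lambda}_j)$ over the common denominator $g(x) = \prod_{j=1}^{n-1}(x - \tilde{\lambda}_j)$, obtaining $f(x) = \tilde{h}(x)/g(x)$ with the explicit numerator $\tilde{h}(x) \defeq \sum_{j=1}^{n-1} s_j \prod_{k \neq j}(x - \tilde{\lambda}_k)$. Each summand is a product of $n-2$ linear factors, so $\deg \tilde{h} \leq n-2$, matching the degree bound in the statement.

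Next I would check that $\tilde{h}$ satisfies the stated interpolation conditions. Evaluating at a node $x = \tilde{\lambda}_i$, every term with $j \neq i$ carries the vanishing factor $(x - \tilde{\lambda}_i)$, so only the $j = i$ term survives, giving $\tilde{h}(\tilde{\lambda}_i) = s_i \prod_{k \neq i}(\tilde{\lambda}_i - \tilde{\lambda}_k)$. Differentiating $g$ by the product rule yields $g'(x) = \sum_{j=1}^{n-1} \prod_{k \neq j}(x - \tilde{\lambda}_k)$, and the same vanishing argument gives $g'(\tilde{\lambda}_i) = \prod_{k \neq i}(\tilde{\lambda}_i - \tilde{\lambda}_k)$. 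Hence $\tilde{h}(\tilde{\lambda}_i) = s_i g'(\tilde{\lambda}_i)$ for every $i$.

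Finally I would invoke uniqueness of polynomial interpolation. The nodes $\tilde{\lambda}_j = 1 - \lambda_{j+1}/2$ are pairwise distinct because the eigenvalues $\lambda_j$ are distinct, by the interleaving Lemma~\ref{lemma:inter} together with the standing assumption that $\baseL$ has no repeated eigenvalues. There is therefore exactly one polynomial of degree at most $n-2$ meeting the $n-1$ interpolation conditions, so $\tilde{h}$ coincides with the $h$ of the statement, yielding $f = h/g$.

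I do not expect a genuine obstacle: the result is a restatement of the Lagrange/partial-fraction correspondence. The only point requiring care is confirming that the shifted eigenvalues never coincide, so that both the common denominator $g$ and the interpolation problem are well posed---and this is precisely what the strict interleaving in Lemma~\ref{lemma:inter} guarantees.
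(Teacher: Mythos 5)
Your proof is correct and complete. The paper itself offers no proof of this proposition---it is quoted directly from Gastinel's work on Cauchy matrices---so your partial-fraction argument (clear denominators to get $\tilde{h}(x) = \sum_{j} s_j \prod_{k \neq j}(x - \tilde{\lambda}_k)$, verify the interpolation conditions using $g'(\tilde{\lambda}_i) = \prod_{k \neq i}(\tilde{\lambda}_i - \tilde{\lambda}_k)$, and conclude by uniqueness of the degree-at-most-$(n-2)$ interpolant at $n-1$ distinct nodes) supplies precisely the standard derivation underlying the citation, and your check that the nodes $\tilde{\lambda}_j$ are pairwise distinct---via the paper's standing no-repeated-eigenvalues assumption and the interleaving in Lemma~\ref{lemma:inter}---is the one point that genuinely needed verification.
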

By leveraging Chebyshev polynomials of the second kind, it can be shown that $g(x) = d_nU_{n-1}(x)$ and $g'(\tilde{\lambda}_j) = d_n(-1)^{j+1}n/(1-\tilde{\lambda}_j^2)$, for $j = 1, \hdots, n-1$, where $d_n = 2^{-(n-1)}$. Since $h$ has degree $n-2$, we can interpolate it from the $n-1$ samples $\lbrace h(\tilde{\lambda}_j) \rbrace_{j = 1}^{n-1}$:
\begin{lemma}
\label{lemma:dst}
    Let $h(x) = \sum_{k = 1}^{n-1}\,c_\ell U_{\ell-1}(x)$. Then, 
    \begin{equation*}
    c_\ell = 2/n\sum_{j = 1}^{n-1} \, h(\tilde{\lambda}_j)\sin(\pi j/n) \, \sin(\ell j\pi /n), \ \ell = 1, \hdots, n-1,
    \end{equation*}
    which can be computed in $O(n\log n)$ via DST-I with FFT.
\end{lemma}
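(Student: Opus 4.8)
The plan is to invert the Chebyshev-$U$ expansion using the discrete orthogonality of sines that underlies the DST-I, rather than solving an interpolation linear system directly. First I would pass to the trigonometric domain. Since $\tilde{\lambda}_j$ are the roots of $U_{n-1}$ and $U_{m}(\cos\theta)=\sin((m+1)\theta)/\sin\theta$, the nodes are $\tilde{\lambda}_j=\cos(j\pi/n)$ for $j=1,\dots,n-1$, and $U_{\ell-1}(\tilde{\lambda}_j)=\sin(\ell j\pi/n)/\sin(j\pi/n)$. Substituting the expansion $h(\tilde{\lambda}_j)=\sum_{k=1}^{n-1}c_k U_{k-1}(\tilde{\lambda}_j)$ into the claimed right-hand side and using $U_{k-1}(\tilde{\lambda}_j)\sin(j\pi/n)=\sin(kj\pi/n)$, the weight $\sin(j\pi/n)$ cancels and the expression collapses to
\[
\frac{2}{n}\sum_{k=1}^{n-1} c_k\sum_{j=1}^{n-1}\sin(kj\pi/n)\sin(\ell j\pi/n).
\]

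The crux is then the discrete orthogonality $\sum_{j=1}^{n-1}\sin(kj\pi/n)\sin(\ell j\pi/n)=(n/2)\,\delta_{k\ell}$ for $k,\ell\in\{1,\dots,n-1\}$. I would prove it with the product-to-sum identity $2\sin\alpha\sin\beta=\cos(\alpha-\beta)-\cos(\alpha+\beta)$ followed by summing the two resulting geometric (Dirichlet-kernel) series; the finite sums $\sum_{j=1}^{n-1}\cos(mj\pi/n)$ evaluate to $-1$ when $m$ is a nonzero even integer in range and to $0$ when $m$ is odd. Since $k-\ell$ and $k+\ell$ always share parity, the two contributions cancel off-diagonal, while the diagonal term yields $n/2$; here it is essential that both indices lie strictly between $0$ and $n$, so no degenerate frequency ($m\equiv 0 \bmod 2n$) occurs. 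Plugging this in leaves exactly $c_\ell$, establishing the formula.

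For the complexity claim, I would observe that the map $g_j \defeq \tfrac{2}{n}\,h(\tilde{\lambda}_j)\sin(j\pi/n)\mapsto \{\sum_j g_j\sin(\ell j\pi/n)\}_\ell$ is, by definition, a type-I discrete sine transform, which admits an FFT-based $O(n\log n)$ realization; the pointwise weighting by $\sin(j\pi/n)$ and the scalar $2/n$ cost only $O(n)$. I expect the main obstacle to be essentially bookkeeping: keeping the index range exactly $\{1,\dots,n-1\}$ throughout so that the sine orthogonality applies without boundary corrections, and tracking the $\sin(j\pi/n)$ weight so that its cancellation against $U_{k-1}(\tilde{\lambda}_j)$ is exact. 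As a sanity check on why the method must succeed, the same formula also follows from writing $c_\ell=\tfrac{2}{\pi}\int_{-1}^1 h(x)U_{\ell-1}(x)\sqrt{1-x^2}\,dx$ and applying $(n-1)$-point Gauss--Chebyshev quadrature of the second kind, which is exact here because $h\,U_{\ell-1}$ has degree at most $2n-4\le 2n-3$.
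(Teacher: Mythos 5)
Your proposal is correct and follows essentially the same route as the paper: both invert the expansion via the discrete orthogonality of the Chebyshev polynomials of the second kind at the roots of $U_{n-1}$, which in your trigonometric variables ($\tilde{\lambda}_j = \cos(j\pi/n)$, $U_{\ell-1}(\tilde{\lambda}_j)\sin(j\pi/n) = \sin(\ell j\pi/n)$) is exactly the weighted identity $\sum_{j}(1-\tilde{\lambda}_j^2)U_{k-1}(\tilde{\lambda}_j)U_{\ell-1}(\tilde{\lambda}_j) = (n/2)\,\delta_{k\ell}$ that the paper cites, followed by the same $1-\tilde{\lambda}_j^2=\sin^2(j\pi/n)$ cancellation and the same DST-I observation for the $O(n\log n)$ claim. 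The only difference is that you prove the sine orthogonality from first principles (product-to-sum plus Dirichlet sums) where the paper cites it from the Chebyshev literature, which makes your argument self-contained but not structurally different.
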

\begin{proof}
Chebyshev polynomials satisfy \cite{mason2002chebyshev}, for $l + m \leq 2n-3$, $
    \sum_{k = 1}^{n-1}\,(1-\tilde{\lambda}_k^2)
    U_l(\tilde{\lambda}_k) U_m(\tilde{\lambda}_k) = n/2 \, \delta[l - m]$,
where $\delta$ is the Kronecker delta. Thus, $c_\ell = 2/n \sum_{j = 1}^{n-1}\, (1-\tilde{\lambda}_j^2)h(\tilde{\lambda}_j) U_{\ell-1}(\tilde{\lambda}_j)$. The result follows from $1-\tilde{\lambda}_j^2 = \sin^2(j\pi / n)$.
\end{proof}
\begin{figure*}
		\centering
        \includegraphics[width=\linewidth]{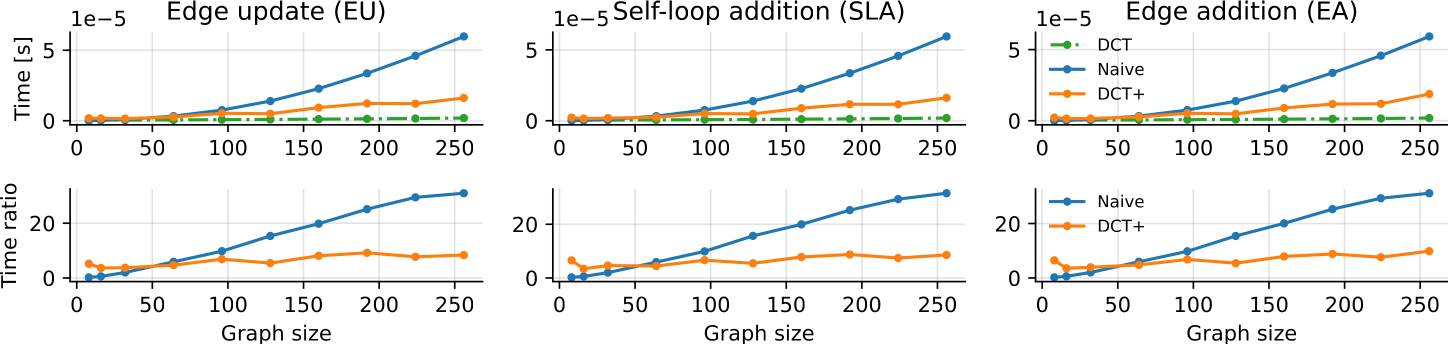}
      \caption{Top row: average runtime for transforming $10000$ AR($0.99$) signals of size $\lbrace 8, 16, 32, 64, 96, 128, 160, 192, 224, 256\rbrace$ using the Fast DCT+ (DCT+), NMVP for DCT+ (Naive), and as a baseline, the time to compute the DCT. Bottom row: results normalized by the DCT time. We consider the three graphs in Fig.~\ref{fig:path_graph}(b-d), letting $\alpha = 1.5$. For edge updates, we update the edge $(2, 3)$. For edge addition, we add $(3, 5)$.}
      \label{fig:fast_results}
      \vspace{-0.5em}
\end{figure*}
Thus, we can obtain an expression for $q_i$ as a scaled trigonometric polynomial in $\acos(\tilde{\mu}_i)$:
\begin{equation}
\label{eq:trigo}
    q_i = \frac{\sum_{k = 1}^{n-1}\, c_k\sin \l k\acos(\tilde{\mu_i})\r}{d_n\sin(n\acos(\tilde{\mu_i}))}, \quad i = 1, \hdots, n-1.
\end{equation}
\subsubsection{Non-uniform Fast Fourier transform}
To evaluate \eqref{eq:trigo}, we use the sinusoidal version of the non-uniform fast Fourier transform (NUFFT) \cite{dutt1993fast} or NFST \cite{plonka2018numerical}. NFST computes \eqref{eq:trigo} at non-uniform sampling points with precision $\epsilon$ and complexity $O(n\log n+n\log(1/\epsilon))$. Hence, Theorem~\eqref{th:fast_line} follows.
\subsection{Extension to other graphs and rank-$k$ updates}
The result is not limited to DCT+ For instance, for the ADST graph, $g(x) = T_n(x)$ (the Chebyshev polynomial of the first kind) and similar techniques can be applied. If $\bar{\mu}_i$ has a trigonometric structure (e.g., DTTs) \cite{lu2022dct}, equation \eqref{eq:trigo} can be evaluated in $O(n\log n)$. This creates relationships between DTTs \cite{puschel_algebraic_2003}.

Rank-$k$ updates preserve Cauchy structures; the proof follows by induction. Thus, if the base transform is in $O(f(n))$, the transform for any rank-$k$ update is in $O(kn\log^2n + f(n))$.
\begin{corollary}
    Let $k\in \mathbb{Z}^+$, $\baseL = \bt U \diag\bfl\bt U^\top$, and $\bt L_k \defeq \baseL + \sum_{j = 1}^k\, \rho_j \bt v_j \bt v_j^\top = \bt X_k\diag{\bfm_k}\bt X_k^\top$. Define $\bt z_k = \bt U^\top \bt v_k$.Then,
        $\bt X_k = \bt X_{k-1} \diag{z_k}\bt C(\bfm_{k-1}, \bfm_{k})\diag{a_k}$,
    with $\bt a_k$ normalizing the columns of $\bt X_k$. 
\end{corollary}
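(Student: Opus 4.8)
The plan is to establish the factorization by induction on $k$, peeling off one rank-one term at a time so that Theorem~\ref{thm:main_eq} can be invoked verbatim at each step. The whole argument hinges on the elementary but crucial observation that $\bt L_k = \bt L_{k-1} + \rho_k\bt v_k\bt v_k^\top$, i.e., that the $k$-th iterate is itself a single rank-one update of its predecessor, $\bt L_k \in \mathcal{P}(\bt L_{k-1})$. This reduces the rank-$k$ problem to $k$ successive applications of the already-established rank-one result.

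For the base case $k=1$ I would apply Theorem~\ref{thm:main_eq} directly to $\bt L_1 = \baseL + \rho_1\bt v_1\bt v_1^\top \in \mathcal{P}(\baseL)$, which gives $\bt X_1 = \bt U\diag{z_1}\bt C(\bfl, \bfm_1)\diag{a_1}$ with $\bt z_1 = \bt U^\top\bt v_1$; writing $\bt X_0 \defeq \bt U$ and $\bfm_0 \defeq \bfl$, this is exactly the asserted recursion. For the inductive step I would assume the eigendecomposition $\bt L_{k-1} = \bt X_{k-1}\diag{\bfm_{k-1}}\bt X_{k-1}^\top$ and apply Theorem~\ref{thm:main_eq} once more, now with $\bt L_{k-1}$ in the role of the unperturbed matrix: its eigenbasis $\bt X_{k-1}$ replaces $\bt U$, its spectrum $\bfm_{k-1}$ replaces $\bfl$, and the perturbation $\bt v_k$ is resolved in the current basis through $\bt z_k \defeq \bt X_{k-1}^\top\bt v_k$. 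The theorem then returns $\bt X_k = \bt X_{k-1}\diag{z_k}\bt C(\bfm_{k-1}, \bfm_k)\diag{a_k}$ with $\bt a_k$ normalizing the columns, closing the induction. Note that at the first step $\bt X_0 = \bt U$, so $\bt z_1 = \bt U^\top\bt v_1$, consistent with the stated definition of $\bt z_k$.

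The part I expect to require the most care is propagating the hypotheses of Theorem~\ref{thm:main_eq} and Lemma~\ref{lemma:inter} through the recursion, rather than the algebra of any single step. Each invocation presumes that the current base matrix $\bt L_{k-1}$ has simple eigenvalues and that $\bt z_k = \bt X_{k-1}^\top\bt v_k$ has no zero entries; only under these conditions does Lemma~\ref{lemma:inter} guarantee that $\bfm_{k-1}$ and $\bfm_k$ strictly interlace and therefore share no common entry, so that the Cauchy matrix $\bt C(\bfm_{k-1}, \bfm_k)$ is well-defined. For generic $\rho_j$ and $\bt v_j$ simplicity and non-degeneracy are preserved at every stage, but in degenerate cases one must deflate before each rank-one step exactly as in the $k=1$ setting, which is the principal technical nuisance. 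Finally, the accompanying complexity claim $O(kn\log^2 n + f(n))$ would follow by iterating Theorem~\ref{th:fast_general}: each of the $k$ rank-one stages contributes a single $O(n\log^2 n)$ Cauchy product on top of the base cost $f(n)$.
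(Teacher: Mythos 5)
Your proof is correct and is essentially the paper's own argument: the paper disposes of this corollary with the single remark that rank-$k$ updates preserve Cauchy structure and that ``the proof follows by induction,'' which is exactly your scheme of peeling off one rank-one term at a time and invoking Theorem~\ref{thm:main_eq} with $\bt L_{k-1}$ in the role of the unperturbed matrix; your iterated complexity bound $O(kn\log^2 n + f(n))$ likewise matches the paper's claim via Theorem~\ref{th:fast_general}. One observation worth keeping: your inductive step uses $\bt z_k \defeq \bt X_{k-1}^\top \bt v_k$, which is what Theorem~\ref{thm:main_eq} actually yields when the base matrix is $\bt L_{k-1} = \bt X_{k-1}\diag{\bfm_{k-1}}\bt X_{k-1}^\top$, whereas the corollary as printed defines $\bt z_k = \bt U^\top \bt v_k$; the two coincide only for $k=1$, so your reading is the correct one and the statement appears to contain a typo. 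Your point about propagating the simple-spectrum and nonzero-entry hypotheses (deflating before each step when they fail) is also the right technical caveat, which the paper leaves implicit.
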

\vspace{-1em}
\begin{table}[t]
    \centering
    \begin{tabular}{lcccccc}
        \toprule \textbf{Size}
          & \textbf{8} & \textbf{16} & \textbf{32} & \textbf{64} & \textbf{128} & \textbf{256} \\ 
        \midrule
            \textbf{EU} & 142.5 & 135.9 & 156.3 & 136.4 & 110.0 & 114.8   \\	
            \textbf{SLA} & 133.7 & 120.5 & 142.2 & 129.5 & 138.3 & 109.3  \\	
            \textbf{EA} & 117.5 & 126.8 & 137.2 & 104.4 & 124.7 & 102.7    \\	
        \bottomrule
    \end{tabular}    
    \caption{Average SNR (in dB) at the output of the Fast DCT+ in Fig.~\ref{fig:fast_results}. The error is measured against the result of NMVP. EU: edge update, SLA: self-loop addition, EA: edge addition.}\label{tab:table_times}
    \vspace{-1em}
\end{table}	
\begin{figure}
    \centering
\includegraphics[width=\linewidth]{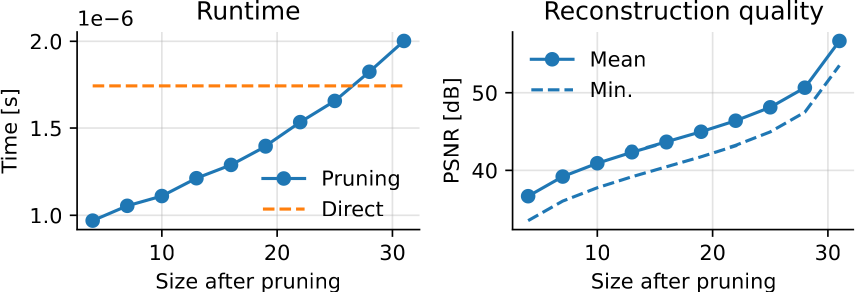}   
    \caption{Aggregated runtime and reconstruction quality versus the number of components after pruning. We consider DCT, ADST, and the transform of a path graph with a self-loop of $1.5$. The direct method computes each transform with its fastest implementation. Signals are rows from blocks of size $32\times 32$, obtained from  $20$ images of the CLIC dataset \cite{CLIC2022}. Reconstruction quality is measured against the output of the direct method; we compute the mean-squared error and take mean and maximum among all signals.}
    \label{fig:results}
    \vspace{-1.5em}
\end{figure}
\section{Numerical Experiments}
\label{sec:exper}
We run simulations in a Intel Xeon CPU E5-2667 with $16$ cores. The DCT/DST are implemented using FFTW \cite{frigo2005fftw}, while the NFST is taken from the NFFT3 package \cite{keiner2009using}. 

\medskip

\noindent \textbf{Fast transforms.} We compare Fast DCT+ with NMVP for DCT+ for different graph sizes. We show our runtime results (Fig.~\ref{fig:fast_results}) and the approximation errors (Table~\ref{tab:table_times}). Fast DCT+ is numerically accurate ($100$ dB); we also observe that the runtime of the Fast DCT+ is approximately  $8$ times the runtime of the DCT, for all graph sizes, verifying the dependency with $O(n\log n)$. For graph sizes larger than $64$, we obtain computational savings with respect to NMVP. 

\medskip

\noindent \textbf{Pruning.} We consider a video coding scenario \cite{han2021technical} with three transforms (cf.~Fig.~\ref{fig:diag_pruning}). We compare the direct approach to a simplified pruning approach, where, for each signal, we compute the corresponding quadratic Laplacian form from the path graph. 
We compare this value with a threshold to decide whether the pruning strategy will be applied or not. The runtime and the error is averaged over all the realizations. The trade-off between accuracy and speed is controlled by the size of the signal after pruning; the threshold is fixed. We do not consider parallel computations for either the transforms in the direct approach or those in the second branch of the pruning scheme. All the products with Cauchy matrices are implemented using NMVP. Results show (Fig.~\ref{fig:results}) speedups in exchange for some error, which is smaller than the typical quantization noise.  

\section{Conclusion}
\label{sec:majhead}
We introduced a theoretical framework to derive fast GFTs for rank-one updates of the path graph, which we termed Fast DCT+. We first provided a structured and progressive factorization for the DCT+ transforms in terms of Cauchy matrices. By leveraging the structure of these matrices, DCT+ can be implemented with precision $\epsilon$ and complexity $O(n\log n+n\log(1/\epsilon))$, offering significant computational savings for larger graph sizes with SNRs above $100$ dB. By exploiting progressivity, we can reduce computations with pruning strategies, achieving runtime improvements while preserving PSNRs above $40$ dB. Future work will focus on further experimental validation, other graph types \cite{fernandez2023image}, and integration with graph learning \cite{pavez2017learning}.

\clearpage

\bibliographystyle{IEEEbib}
\bibliography{ICASSP_2025}
\end{document}